\theoremstyle{plain}
\newtheorem{lemma}{Lemma}
\newtheorem{property}{Property}
\title{Resource Allocation for NOMA-based LPWA Networks Powered by Energy Harvesting}
\author{Fatma Benkhelifa, Julie McCann\\ 
\small Imperial College London, London, UK\\ 
\small \{f.benkhelifa,j.mccann\}@imperial.ac.uk
}
\begin{document}
\maketitle

\begin{abstract}
{\ In this paper, we explore perpetual, scalable, Low-powered Wide-area networks (LPWA). Specifically we focus on the uplink transmissions of non-orthogonal multiple access (NOMA)-based LPWA networks consisting of multiple self-powered nodes and a NOMA-based single gateway. The self-powered LPWA nodes use the "harvest-then-transmit" protocol where they harvest energy from ambient sources (solar and radio frequency signals), then transmit their signals. The main features of the studied LPWA network are different transmission times-on-air, multiple uplink transmission attempts, and duty cycle restrictions. The aim of this work is to maximize the time-averaged sum of the uplink transmission rates by optimizing the transmission time-on-air allocation, the energy harvesting time allocation and the power allocation; subject to a maximum transmit power and to the availability of the harvested energy. We propose a low complex solution which decouples the optimization problem into three sub-problems: we assign the LPWA node transmission times (using either the fair or unfair approaches), we optimize the energy harvesting (EH) times using a one-dimensional search method, and optimize the transmit powers using a concave-convex (CCCP) procedure. In the simulation results, we focus on Long Range (LoRa) networks as a practical example LPWA network. We validate our proposed solution and we observe a $15\%$ performance improvement when using NOMA. 
\par}
\end{abstract}
\begin{IEEEkeywords}
LPWANs, non-orthogonal multiple access, duty cycle, energy harvesting, time-averaged sum-rate maximization.
\end{IEEEkeywords}

\section{Introduction}
{\ Low power wide area (LPWA) networks are being widely used in many applications that span our everyday life such as environmental monitoring, smart homes, agriculture, energy management, health monitoring, etc. LPWA networks (LPWANs) have captured both industrial and research interest due to their long-range communication distances and low power operation, supporting the use of low cost devices. Typically, these systems are characterized by a relatively large number of deployed devices, each with limited resources that need to deliver reliable data rates to, potentially critical, applications. Many standard technologies implement LPWAN such as SigFox, Narrowband Internet of Things (NB-IoT)~\cite{Primer_NBIOT}, Long Range (LoRa)~\cite{loraweb}, Weightless, Ingenu, etc. Most of these technologies rely on the use of different resource blocks (time, frequency, channels, etc.) assigned to different clusters of nodes to allow massive connectivity and reliability among nodes. 
\par}
{\ Serving a huge number of nodes, being bounded by limited resources, poses a challenge to LPWANs while continuing to maintain reliability and sustainability requirements. Non-orthogonal multiple access (NOMA) has been recently proposed as a promising solution to remedy the interference limitations and enable the extension of network capacity and access capability. NOMA allows nodes to share nonorthogonal resources to improve spectral efficiency while allowing some degree of multiple access interference at receivers \cite{tutnoma}. NOMA multiplexing falls under two types: power domain multiplexing and code domain multiplexing. Power domain multiplexing based NOMA assigns different powers to different nodes according to their channel conditions, where successive interference cancellation (SIC) is performed at the receiver side. In uplink NOMA, SIC decodes the received signals successively by first ordering the nodes according to their channel gains, starts by decoding the most powerful signal, subtracts it from the remaining signals, and repeats the same process on the resulting signals, until it decodes the less powerful signal after subtracting all the interfering signals.
\par}
{\ Despite its potential to combat interference, only few research works have investigated the incorporation of NOMA in LPWANs \cite{nbiotnoma1,nbiotnoma2,kaihan}. In \cite{nbiotnoma1}, connection density was maximized while jointly optimizing subcarrier and power allocation for NOMA-based NB-IoT for machine-to-machine communications (M2M). Optimal and near optimal solutions were proposed using mixed integer linear programming and difference of convex programming approaches. Also, suboptimal solutions were proposed using low-complexity heuristic algorithms. In \cite{nbiotnoma2}, the total throughput was maximized for a NOMA-based NB-IoT for M2M while optimizing the resource allocation and NOMA node clustering satisfying transmit power constraints and quality of service requirements. In \cite{kaihan}, the minimum throughput rate was maximized for NOMA-enabled LPWAN while optimizing the channel allocation, the transmission time allocation and the power allocation. Here, only one transmission attempt was considered per node independently of the spreading factor (SF) value. 
\par}
{\ Energy consumption is another major challenge in LPWANs where most devices are generally equipped with batteries with limited capacity - it may be difficult or expensive to replace batteries, especially in difficult to access or dangerous locations. In order to improve the energy efficiency of LPWAN devices, energy harvesting (EH) is a promising solution that provides continuous energy to ensure operation sustainability for LPWANs \cite{EHWSN}. Indeed, energy can be harvested from many sources: wind, solar, radio frequency (RF) signals \cite{fundamsurveryEH}, etc. Few papers have considered energy harvesting capabilities for LPWAN networks. For example, in \cite{EHlpwan}, a control framework for EH nodes connected through multiple access channels was developed. A throughput-optimal policy was derived for the genie-aided case of non-causal knowledge of the number of active nodes and a Bayesian estimation approach was put forward for the practical cases. In \cite{EHlpwan2}, centralized and distributed resource and task scheduling was studied for EH-based IoT network while minimizing the energy consumption satisfying a minimum average data rate and minimum task performing rate. The IoT devices are powered by a hybrid access point (HAP), harvesting from both on-grid and renewable energy sources. In~\cite{ehlora}, a battery-less LoRa wireless sensor was proposed that monitors road conditions powered by an electromagnetic energy harvester based on a Halbach configuration. In \cite{myICCpaper}, an EH-based LoRa network was studied where devices were harvesting energy from ambient sources and using the energy to transmit once. The minimum uplink throughput rate was maximized while optimizing the EH time allocation, the SF allocation and power allocation. \cite{myICCpaper} highlighted that interference was the main limitation of the system performance and not mainly the energy scarcity.
\par}
{\ However, to the best of our knowledge, incorporating both the interference cancellation techniques and energy harvesting capabilities for LPWA networks weren't studied before in the literature. In addition, none have examined the time averaged sum of transmission throughput rate for a large time window where multiple transmission attempts are possible. Given this context, our work is the first to investigate the uplink NOMA-based LPWA network with nodes harvesting energy from ambient RF or solar sources. In our scheme, the gateway employs NOMA to decode the overlapping signals received from the simultaneous transmissions of LPWA nodes. Our objective is to maximize the temporal averaged sum of uplink transmission rates while optimizing the EH time duration, the transmission time duration, and the transmit powers subject to maximum transmit power at each node and subject to the availability of harvested power in the storage unit. 
\par}

\section{System Model}
{\ We consider a NOMA-based LPWA network where multiple $U$ LPWA nodes are communicating with a single gateway during a time window $T$. We consider a star-of-stars topology where the $U$ LPWA nodes are uniformly distributed in a circle of radius $R$ centred around the gateway, and the gateway relays the data  transmitted from the LPWA nodes to the server. For the EH protocol, the LPWA nodes use the harvest-then-transmit (HTT) protocol where they harvest energy from ambient external sources (RF or solar), during the EH time duration, then transmit their information data using the harvested power during their transmission time-on-airs, and finally store the remaining energy after transmission in rechargeable batteries. The gateway is equipped with SIC to decode the nodes sharing the same resources.
\par}
{\ We subdivide the time window $T$ into $K$ equal time slots of duration $T_{slot}$. At the time slot $k=1,\dots,K$, the channel between the $n$'th LPWA node and the gateway is modelled as a Rayleigh fading channel with path loss as
		\begin{align}
			g_{n,k} &= h_{n,k} \left(d_{n,k}\right)^{-\alpha},
		\end{align}
    where $n=1,\dots,U$, $h_{n,k}$ is the small scale fading that is exponentially distributed with unit mean, $d_{n,k}$ is the distance between the n'th LPWA node and the gateway, and $\alpha$ is the path loss exponent. Without loss of generality, the channel gains are sorted in a descending order, i.e. $g_{1,k}\geq g_{2,k}\geq \dots \geq g_{U,k}$. 
    \par}
\subsection{LPWAN Physical Layer}
{\ For the LPWAN physical layer, the nodes have access to a number $M$ of possible transmission time-on-air (ToA). The $n$'th LPWA node is assigned to one out of M possible ToAs given by
    \begin{align}
		T_{a,n} &= nb_n \hspace{1mm} T_{s,n} \in \mathcal{T}_a =\left\{ t_{a,i}= nb_i t_{s,i}; i=1,\dots,M \right\},
	\end{align}
	where $nb_n$ is the number of symbols, $T_{s,n}$ is the symbol duration, $\mathcal{T}_a$ is the set of possible ToAs of discrete cardinality equal to $M$, i.e. $\vert\mathcal{T}_a\vert=M$. 
\par} 
{\ In addition, the LPWA network is subject to duty cycle restrictions. For example, European frequency regulations impose duty cycle restrictions on some LPWA bands, such as LoRa operating on $868$ MHz sub-bands to have duty cycle restrictions of either $1\%$ or $10\%$~\cite{doppler}. Let us denote by $d$ the duty cycle which can be equal either to $1\%$ or $10\%$. This means that each LPWA node should stay silent $(1-d)\%$ of the packet duration once the transmission happens over one channel. Subsequently, its time-off per channel and the total packet duration are expressed as $T_{off,n}= \frac{1-d}{d}  T_{a,n}$ and $T_n = \frac{T_{a,n}}{d}$, respectively. Let $\text{BW}$ denote the bandwidth of the network. 
\par}
{\ In order to simplify our analysis, we consider the following two properties verified by our LPWA network:
    \begin{property}\label{propto1}
        We assume that the transmission ToAs verify $t_{a,1}< t_{a,2} < \dots < t_{a,M}$ and $\sum_{i=1}^M t_{a,i} < t_1$. The first inequality means that the LPWA nodes with higher $T_{a,n}=t_{a,i}$, for $i>1$, cannot transmit again until the nodes with the least $t_{a,1}$ transmit. The second inequality means that, during the time period $t_1$, all LPWA nodes are transmitting once and no more than that.
    \end{property}
    \begin{property}\label{propto2}
        We assume that all LPWA nodes have the same number of symbols. Subsequently, the ToAs verify $t_{a,i}=2 t_{a,i-1}$, $\forall i=2,\dots,M$. Same property is verified by $t_{off,i}=2t_{off,i-1}$ and $t_i=2 t_{i-1}$, $\forall i=2,\dots,M$.
    \end{property} 
    Satisfying properties \ref{propto1} and \ref{propto2}, we choose that the slot duration is exactly equal to the lowest packet duration $T_{slot}=t_1$, without loss of generality. This way, the LPWA nodes send only at time slots multiple of $\frac{T_{a_n}}{t_{a,1}}$. Let $\rho_n(k)$ be a binary variable denoting if the $n$'th LPWA node is transmitting or not at the $k$'th time slot as
    \begin{align}
        \rho_n(k) &=\begin{cases}
        1, &\mbox{ if HTT mode i.e. }  \mod(k,\frac{T_{a,n}}{t_{a,1}})=0,\\
        0, &\mbox{ if EH mode i.e. }  \mod(k,\frac{T_{a,n}}{t_{a,1}})\neq 0,
        \end{cases}
    \end{align}
    and $\mu_n(k)$ denotes the number of uplink transmission attempts performed until the time slot $k$, i.e. $\mu_n(k)=\sum_{j=1}^k \rho_n(k)$. Thus, the $n$'th LPWA node sends its $m_n$ transmission attempts at time slot $k_n = m_n \frac{T_{a,n}}{t_{a,1}}=\rho_n(k) k$, with $m_n=1,\dots,a_n$ where $a_n$ is the maximum number of uplink transmission attempts during the time window $T$ verifying $a_n = d \left\lfloor \frac{T}{T_{a,n}} \right\rfloor$. 

\subsection{Harvest-Then-Transmit Protocol}
{\ Powered via ambient sources such as RF or solar, the nodes use the HTT protocol, harvesting the required energy first before transmitting their data. 
For a given time slot $k=1,\dots,K$, each LPWA node $n$ can be either in EH mode (i.e. $\rho_n(k)=0$) or in HTT mode (i.e. $\rho_n(k)=1$ or equivalently $k\propto \frac{T_{a,n}}{T_{slot}}$), depending on its transmission ToA assigned by the gateway. 
For the EH mode, the LPWA node harvests energy up to the time slot duration $T_{slot}$. For the HTT mode, the LPWA node harvests energy up to $\left(T_{slot}-T_{a,n}\right)$ and then transmits its data to the gateway during its ToA $T_{a,n}$. Let $\tau_{e,n}(k)$ be the EH duration at the $n$'th LPWA node during the $k$'th time slot. To comply with the HTT protocol and duty cycle restrictions, $\tau_{e,n}(k)$ is subject to the constraint $0 \leq \tau_{e,n}(k) \leq \tau_{e,n,max}(k)$, where the maximum EH time for the $n$th LPWA node during the $k$'th time slot is $\tau_{e,n,max}(k)=T_{slot}-\rho_n(k) T_{a,n}$. As said previously, the LPWA nodes are harvesting energy from RF or solar sources. 
\par} 
\begin{paragraph}{RF Source}
{\ We harvest from ambient $N_b$ power beacons (PBs) located at distance $d_{b,n}$ from the $n$'th LPWA node and transmitting with power $P_b$. The RF harvested energy per time unit for the $n$'th LPWA node during the $k$'th time slot is modelled as
    \begin{align}
        E_n(k) &= \sum_{b=1}^{N_b} \Psi\left(P_{rec}(n,b)(k) \right),
    \end{align}
    where $P_{rec}(n,b)(k)= P_b h_{b,n,k}  d_{b,n}^{-\alpha_b}$ is the received power from the $b$'th PB at the $n$'th LPWA node, $h_{b,n,k}$ is the channel between the $b$'th PB and the $n$'th LPWA node at $k$'th slot, $\alpha_b$ is the path loss exponent between PBs and LPWA nodes, and the function $\Psi(\cdot)$ is defining the relationship between the harvested power and the received power. If the nonlinear (NL) EH model in \cite{elena} is considered, $\Psi(\cdot)$ is defined as $\Psi(x)  = \frac{\beta(x) -M \Omega}{1-\Omega}$, where $\beta(x)= \frac{M}{1+e^{-a(x-b)}}$, $\Omega=\frac{1}{1+e^{ab}}$, $M$ is the maximum harvested energy, and $a$ and $b$ are experimental parameters which reflect the nonlinear charging rate with respect to the input power and the minimum required turn-on voltage for the start of current flow through the diode, respectively \cite{elena}. If the linear model is considered, $\Psi(\cdot)$ is simply given by $\Psi(x) = \zeta_{RF} x$, where $\zeta_{RF}\in\left[0,1\right]$ is the RF conversion efficiency.
\par}
\end{paragraph}
\begin{paragraph}{Solar Source}
{\ The solar harvested energy per time unit for the $n$'th LPWA node during the $k$'th time slot is given by \cite{greg}
    \begin{align}
        E_n(k) &= \zeta_S A_{area} G_{bn} cos(\theta),
    \end{align}
    where $\zeta_S$ is the solar conversion efficiency, $A_{area}$ is the solar panel area, $G_{bn}$ is the solar irradiation hitting a titled solar plane, and $\theta$ is the angle of incidence of the sun rays on the titled plane.
\par} 
\end{paragraph}
{\ Subsequently, the harvested energy and the corresponding harvested power for the $n$ LPWA node during the $k$'th time slot are given by
    \begin{align}
        E_{h,n}(k) &= \tau_{e,n} E_n(k), &\text{ and }&&
        P_{h,n}(k) &= \frac{\tau_{e,n} E_n(k)}{T_{a,n}},
    \end{align}
    respectively. Furthermore, the LPWAN nodes are equipped with rechargeable battery that stores the excess of energy that wasn't used to transmit. Hence, the available power to use for the $n$'th LPWA node at the $k$ time slot is the cumulative sum of harvested power from the previous time slots and the current time slot minus the sum of transmit powers during the previous time slots.
    \begin{align}
    	P_{a,n}(k) &= \sum_{j=1}^{k} P_{h,n}(j)-\sum_{j=1}^{k-1} \rho_n(j) p_n(j),
    \end{align}
    where $p_n(j)$ is the transmit power of the $n$'th LPWA node at the $j$'th time slot subject to the maximum transmit power $P_t$ and to the causality conditions as follows
    \begin{align}
        &0 \leq p_n(k) \leq P_{t}, &\text{ and }&
    	&0 \leq p_{n}(k) \leq P_{a,n}(k).
    \end{align}
    These mean that the $n$'th LPWA node cannot transmit more than the maximum power $P_t$ and more than the available power $P_{a,n}(k)$ at the $k$'th time slot.
\par}
\section{Sum-Rate Maximization Problem}
{\ Having access to different transmission ToAs and harvesting energy until one transmission attempt, packet collision overlap time can happen between LPWA nodes and that is depending on their TAs and their EH times. Considering the general expression of packet collision time between LPWA nodes in \cite{myICCpaper}, the packet collision time for this LPWA network is formulated as
    \begin{align}
        col_{n,m}(k) &= \begin{cases}
					0, &\hspace{-44mm}\mbox{ if }  \vert \tau_{e,n}(k) - \tau_{e,m}(k) \vert\geq   \min\left(T_{a,n},T_{a,m}\right),\\
					\min\left(T_{a,n},T_{a,m}\right)-\vert \tau_{e,n}(k) - \tau_{e,m}(k) \vert,\\ &\hspace{-44mm}\mbox{ if }  \vert \tau_{e,n}(k)  - \tau_{e,m}(k) \vert <  					\min\left(T_{a,n},T_{a,m}\right).
					\end{cases}
    \end{align}
    which implicitly assumes that $\left(\tau_{e,n}(k) - \tau_{e,m}(k)\right)$ has the same sign as $\left(T_{a,n}-T_{a,m}\right)$. 
\par}
\subsection{NOMA-based SINR}
{\ As seen so far, there are so many sources of interference that can occur at the gateway, it makes it hard for the gateway to decode the received data properly. The NOMA technique uses an additional dimension, i.e. the power domain, and superposes the LPWA nodes sharing the same resources (i.e. time, frequency, spreading code, etc.). The decoding of the nodes sharing the same resources happens with the virtue of the suboptimal multiuser detector which is SIC. This technique means that the LPWA nodes are ordered in a descending way according to their channel gains. The most powerful LPWA node is decoded first while treating all the remaining nodes as noise. Then its decoded signal is substituted from all the remaining nodes. The second powerful node is decoded. Then, its decoded signal is substituted from the remaining nodes. This process continues successively until the last node which is decoded with zero interference. Subsequently, the NOMA-based signal-to-interference plus noise ratio (SINR) of the $n$'th LPWA node at the $k$'th time slot is written as 
        \begin{align}
        	\gamma_n^{col,NOMA}(k) &= \frac{p_n(k) g_n(k) }{ \sum_{m = n +1}^U \rho_m(k) \eta_{n,m} (k) p_m(k) g_m(k) + \sigma^2},
	    \end{align}
	    where $\sigma^2$ is the variance of the additive white Gaussian noise (AWGN) at the LPWAN gateway, $\eta_{n,m}(k)=\frac{col_{n,m}(k)}{T_{a,n}} \xi_{n,m}$, and $\xi_{n,m}$ is the correlation factor between the coded waveforms for LPWA nodes $m$ and $n$. Note that $\xi_{n,m}=1$ if $m=n$ and $0\leq\xi_{n,m}<1$ if $m \neq n$. At time slot $k=1,\dots,K$, the instantaneous transmission rate of $n$'th LPWA node is given by 
    	\begin{align}
	    	R_n^{NOMA}(k) &=   \rho_n(k) \log_2\left( 1 +	\gamma_n^{col,NOMA}(k)\right), 
    	\end{align}
	and the temporal average of throughput rate of LPWA node $n$ for the time period $T$ can be written as
    	\begin{align}
        	R_n^{NOMA} &=  \frac{T_{a,n}}{T}\sum_{k =1}^K\rho_n(k) \log_2\left( 1 + \gamma_n^{col,NOMA}(k)\right). 
    	\end{align}
\par}
\subsection{Problem Formulation}
{\ In this paper, we propose to maximize the time-averaged sum rate of all LPWA nodes subject to energy harvesting availability and the use of NOMA technique at the gateway which can be formulated as
    \begin{subequations}
		\begin{align}
		\underset{T_{a,n}, \tau_{e,n}(k), p_n(k)}{\max }\hspace{1mm}& \sum_{n=1}^U R_n^{NOMA},\label{objfun}\\
		\text{s.t. }
		& \text{ C$_{1}$: } 0 \leq p_{n}(k) \leq P_t,  \\
		& \text{ C$_{2}$: } 0 \leq  \sum_{j=1}^{k } \rho_n(j)p_n(j) \leq \sum_{j=1}^k \frac{\tau_{e,n}(j) E_n(j)}{T_{a,n}}, \\ 
		& \text{ C$_{3}$: } 0  \leq \tau_{e,n}(k) \leq \tau_{e,n,max}(k),\\
		& \text{ C$_4$: } 1 \leq \sum_{j=1}^{k } \rho_n(j) \leq a_{n},  T_{a,n} \in\mathcal{T}_a,
		\end{align}\label{mainProbGenT}
	\end{subequations}	
	where the constraint C$_1$ is due to the maximum transmit power constraint at each LPWA node, the constraint C$_2$ is because each LPWA node cannot transmit with a power greater than the available power, the constraint C$_3$ is to respect the maximum EH time, and the constraint C$_4$ is due to the different ToA assignment at each LPWA node. 
	This sum-rate optimization problem (\ref{mainProbGenT}) is a nonconvex problem since it is a mixed-integer programming problem and the objective function is nonconcave due to the interference from colliding LPWA nodes. Hence, the optimal solution is computationally expensive. For example, the exhaustive search of only the ToA assignment has at least a complexity of the order of $U^{KM} N_{\epsilon}^2$, where $N_{\epsilon}$ is the complexity of the one-dimensional search method. In order to deal with its high complexity, we propose a low complex solution that can be implemented in IoT devices. We decouple the main problem into three sub-problems where each optimization variable is optimized separately. First, we assign the ToAs while assuring either fairness or unfairness between LPWA nodes. Second, we optimize the EH time using a one-dimensional exhaustive search. Finally, we optimize the transmit powers for the given SF and EH time.
	\par}
\subsection{ToA Allocation Scheme}
{\ First, we propose to optimize the ToAs assignment. 
Since the exhaustive search is of complexity $U^{K M} N_{\epsilon}^2$, we propose to either fairly or unfairly assign ToAs between LPWA nodes. We assume that the gateway has a minimum receiver sensitivity to satisfy, namely $\xi_{min}$. We assume that all nodes are transmitting with their maximum powers (satisfying both constraints C$_1$ and C$_2$), namely $P_{n,max}$. To further simplify, we perform the ToA assignment only once based on the resources available during the first time slot and we keep the ToA assignment constant during the $K$ time slots. The ToA assignment happens as follows:
    \begin{itemize}
        \item Compute the RSSIs of all LPWA nodes at the beginning of the first time slot.
        \item Find the number $U_a$ of active LPWA nodes having RSSI $>\xi_{min}$.
        \item Divide the $U_a$ ordered LPWA nodes into $M$ groups of size $k_i$, $i=1,\dots,M$, according to their RSSIs. \begin{itemize}
            \item Unfair SF allocation: equally divides the LPWA nodes into $M$ groups, with $k_i=\frac{U_a}{M}$.
            \item Fair SF allocation: fairly divides the LPWA nodes into $M$ groups, such that $k_i t_i=k_j t_j$. Hence, the size of each group is $k_i= \frac{U_a}{t_i \sum_{j=1}^M \frac{1}{t_j}}$.
        \end{itemize} 
    \end{itemize}
\par}
\subsection{EH Allocation Scheme}
{\ Next, for a given ToA assignment, we propose to optimize the EH time allocation. First, recall that we are using the HTT protocol where there is no data transmission during the time when the LPWA node is harvesting energy. Also, the EH time is only present in the two constraints C$_{2}$ and C$_{3}$ in (\ref{mainProbGenT}). Since the constraint C$_2$ depends on the previous time slots, we propose to proceed successively where we start to optimize $\tau_{e,n}(1)$, then optimize successively $\tau_{e,n}(k)$ given $\tau_{e,n}(j)$ for $j=1,\dots,k-1$. For a given time slot, each LPWA node can be in EH mode or HTT mode. Let us denote by $\tilde{\tau}_{e,n,1}(k)=\frac{P_t T_{a,n}}{E_n(k)} \mu_n(k)- \sum_{j=1}^{k-1} \frac{\tau_{e,n}(j) E_n(j)}{E_n(k)}$, and $\tilde{\tau}_{e,n,2}(k)=\min\left(\tilde{\tau}_{e,n,1}(k), \tau_{e,n,max}(k)\right)$. 
	\begin{lemma}
	In EH mode, the optimal EH time is exactly equal to the slot duration. 
	In HTT mode, the optimal EH time depends on the monotonicity of the collision time. If the collision time is a monotonic increasing function with respect to the EH time, the optimal EH time is $\tilde{\tau}_{e,n,2}(k)$. If the collision time is a monotonic decreasing function with respect to the EH time, the optimal EH time is $\tau_{e,n,max}(k)$. Otherwise, the optimal EH time is obtained by a line search method in $\left(0, \tau_{e,n,max}(k)\right]$.
	\end{lemma}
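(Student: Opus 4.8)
The plan is to treat the per-slot, per-node optimization of $\tau_{e,n}(k)$ with everything else held fixed (the ToA assignment, the other nodes' EH times at slot $k$, and node $n$'s own EH allocations $\tau_{e,n}(j)$ for $j<k$ already chosen by the successive procedure), and to isolate the two distinct channels through which $\tau_{e,n}(k)$ enters the objective: first, the causal energy-availability constraint C$_2$, which couples the current slot to future transmissions; and second, the collision time $col_{n,m}(k)$, which enters $\gamma_n^{col,NOMA}(k)$ through $\eta_{n,m}(k)$ and therefore the interference both seen by and caused by node $n$. I would dispose of the EH mode first, then analyze the HTT mode by pinning the transmit power at its maximal feasible value and reducing the question to a one-dimensional study of the collision time.

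For the EH mode ($\rho_n(k)=0$), the plan is to observe that the prefactor $\rho_n(k)$ annihilates node $n$'s own rate term, and that a silent node contributes nothing to any interference sum, since each interference summand carries the factor $\rho_m(k)$. Hence $\tau_{e,n}(k)$ leaves every slot-$k$ rate unchanged and enters the objective only through C$_2$, where a larger $\tau_{e,n}(k)$ weakly relaxes the energy budget available to node $n$ in the subsequent slots. Because relaxing a constraint can only increase the attainable optimum and harvesting carries no penalty in a non-transmitting slot, the optimum is reached at the upper boundary $\tau_{e,n}(k)=\tau_{e,n,max}(k)=T_{slot}$ (using $\rho_n(k)=0$). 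This step is routine.

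For the HTT mode ($\rho_n(k)=1$) I would first note that $\gamma_n^{col,NOMA}(k)$, and hence the rate, is strictly increasing in $p_n(k)$, so it is optimal to transmit at the largest power the budget permits; with all powers set to $P_t$, constraint C$_2$ makes this feasible exactly when $\tau_{e,n}(k)\geq\tilde{\tau}_{e,n,1}(k)$, and the cap C$_3$ then identifies $\tilde{\tau}_{e,n,2}(k)=\min(\tilde{\tau}_{e,n,1}(k),\tau_{e,n,max}(k))$ as the smallest EH time realizing that power. With the power pinned, the only residual dependence of the objective on $\tau_{e,n}(k)$ is through the collision time, which is symmetric, $col_{n,m}(k)=col_{m,n}(k)$, and thus governs both the interference inflicted on node $n$ by nodes $m>n$ and that inflicted by node $n$ on nodes $m<n$; since every SINR is monotonically decreasing in its $col_{n,m}(k)$, the sum rate decreases monotonically with the collision time. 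I would then split into the three stated cases: if $col_{n,m}(k)$ is nondecreasing in $\tau_{e,n}(k)$, minimizing interference subject to feasibility forces the smallest admissible EH time $\tilde{\tau}_{e,n,2}(k)$; if it is nonincreasing, it forces the largest, $\tau_{e,n,max}(k)$; and otherwise no closed form exists and a line search over $\left(0,\tau_{e,n,max}(k)\right]$ is invoked.

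The main obstacle I anticipate is the HTT reduction itself: justifying that fixing the power at its maximal feasible value decouples the problem cleanly, and handling the fact that the several terms $col_{n,m}(k)$, one per interfering pair, need not share a common monotonicity in $\tau_{e,n}(k)$ as $\tau_{e,n}(k)$ sweeps past the various $\tau_{e,m}(k)$. This is exactly where the sign convention noted after the definition of $col_{n,m}(k)$, namely that $\tau_{e,n}(k)-\tau_{e,m}(k)$ shares the sign of $T_{a,n}-T_{a,m}$, does the work: it fixes the sign of each $\vert\tau_{e,n}(k)-\tau_{e,m}(k)\vert$ and makes the individual collision terms co-vary, so that the phrase \emph{the} collision time acquires a well-defined monotonicity, and the remaining regime in which that monotonicity fails is precisely the third case deferred to the line search.
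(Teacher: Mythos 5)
Your EH-mode argument matches the paper's (with $\rho_n(k)=0$ the node transmits nothing, constraint C$_2$ drops, and the EH time fills the slot), and your HTT-mode skeleton---reduce to a one-dimensional question about the collision time and read the optimum off its monotonicity---is also the paper's. But there is a genuine gap, and it sits exactly where you flagged ``the main obstacle.'' By pinning the power at $P_t$ you silently restrict attention to EH times $\tau_{e,n}(k)\geq\tilde{\tau}_{e,n,1}(k)$ and presume such times exist below the cap $\tau_{e,n,max}(k)$. The paper's proof instead splits the HTT mode into two regimes: (i) the energy-abundant regime, $\sum_{j=1}^{k}\rho_n(j)P_t \leq \sum_{j=1}^{k}\tau_{e,n}(j)E_n(j)/T_{a,n}$, where C$_2$ is slack, the power truly is pinned at $P_t$, and your collision-monotonicity argument applies verbatim on $\left[\tilde{\tau}_{e,n,1}(k),\tau_{e,n,max}(k)\right]$; and (ii) the energy-starved regime, where the feasible EH window is $\left[0,\min\left(\tilde{\tau}_{e,n,1}(k),\tau_{e,n,max}(k)\right)\right]$ and the achievable transmit power is itself an increasing function of $\tau_{e,n}(k)$. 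In regime (ii) your key claim---that once the power is fixed the objective depends on $\tau_{e,n}(k)$ only through the collision time---is false: raising the EH time raises the node's own power as well as (possibly) its collision, and the lemma's prescription $\tilde{\tau}_{e,n,2}(k)$ in the increasing-collision case rests on resolving that power-versus-collision trade-off in favor of harvesting more. The paper makes that resolution explicitly as its second case; your argument never reaches it because you have already discarded the sub-$\tilde{\tau}_{e,n,1}(k)$ interval, and it is precisely in this regime (e.g.\ when $\tilde{\tau}_{e,n,1}(k)>\tau_{e,n,max}(k)$, so $P_t$ is unreachable at any feasible EH time) that ``the largest power the budget permits'' cannot be fixed independently of $\tau_{e,n}(k)$.

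Two smaller points. First, your justification for maximal power---$\gamma_n^{col,NOMA}(k)$ is increasing in $p_n(k)$---is a per-node statement, not a sum-rate statement: $p_n(k)$ appears in the interference of every node decoded before $n$, and the paper's own Remark notes that maximal power is sum-rate optimal only in the orthogonal special case; the paper's proof avoids this claim entirely by arguing about the feasible set cut out by C$_1$--C$_3$ rather than about optimal powers. Second, the assertion that the sign convention on $\tau_{e,n}(k)-\tau_{e,m}(k)$ makes the collision terms ``co-vary'' is incorrect: under that convention, interferers with smaller ToA have collision \emph{decreasing} in $\tau_{e,n}(k)$ while those with larger ToA have it \emph{increasing}, so the aggregate need not be monotone at all---that is exactly why the lemma's third (line-search) case exists, and the common monotonicity should be treated as the lemma's hypothesis, not as a consequence of the sign convention.
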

	\begin{proof}
	For the EH mode (i.e. $\rho_n(k)=0$), there is no data transmission during that time slot and the constraint C$_2$ drops. So, the EH time is exactly equal to the time slot duration. 
	For the HTT mode (i.e. $\rho_n(k)=1$), we have two possible cases because of the maximum transmit power constraint in C$_1$: either we are harvesting more than what we need (i.e. $\sum_{j=1}^k \rho_n(j) P_t \leq \sum_{j=1}^k  \frac{\tau_{e,n}(j) E_n(j)}{T_{a,n}}$), or we are harvesting less than what we need (i.e. $\sum_{j=1}^k \rho_n(j) P_t \geq \sum_{j=1}^k \frac{\tau_{e,n}(j) E_n(j)}{T_{a,n}}$). 
	If $\sum_{j=1}^k \rho_n(j) P_t \leq \sum_{j=1}^k \frac{\tau_{e,n}(j) E_n(j)}{T_{a,n}}$, the constraint C$_2$ is trivially satisfied and the EH time should satisfy
	\begin{align}
	    \tilde{\tau}_{e,n,1}(k)  \leq  \tau_{e,n}(k)  \leq \tau_{e,n,max}(k).
	\end{align}
	If the collision time is a monotonic decreasing function with respect to the EH time, the optimal EH time is $\tau_{e,n,max}(k)$. 
	If the collision time is a monotonic increasing function with respect to the EH time (or independent of the EH time), the optimal EH time is $\tilde{\tau}_{e,n,1} (k)$. Otherwise, the optimal value is between $\tilde{\tau}_{e,n,1} (k)$ and $\tau_{e,n,max}(k)$. If $\sum_{j=1}^k \rho_n(j) P_t \geq \sum_{j=1}^k \tau_{e,n}(j) \frac{E_n(j)}{T_{a,n}}$, this means that we are not harvesting enough what we need ($P_t$). The EH time at the $k$'th time slot should satisfy
	\begin{align}
	   0 \leq \tau_{e,n}(k) \leq \min\left(\tilde{\tau}_{e,n,1}(k),\tau_{e,n,max}(k)\right)\leq \tau_{e,n,max}(k).
	\end{align}
	If the collision time is a monotonic increasing function with respect to the EH time (or independent of the EH time), the optimal EH time is $\min\left(\tilde{\tau}_{e,n,1} (k),\tau_{e,n,max}(k)\right)$. Otherwise, the EH time is obtained by a one-dimensional search method in $\left(0,\min\left(\tilde{\tau}_{e,n,1} (k),\tau_{e,n,max}(k)\right)\right]$. 
	\end{proof}
\par}
\subsection{Power Allocation Scheme}
{\ Given the ToA and EH allocation, we propose to investigate the power allocation for the $U_a$ LPWA nodes during the $K$ time slots. The objective function is nonconcave due to the cumulative interference in the denominator which makes it hard to solve the sum rate optimization problem. Thus, we propose a lower bound on the instantaneous transmission based on the concave-convex procedure (CCCP) for which we can obtain an optimal solution. The CCCP uses a first-order Taylor approximation to derive a concave lower bound on the objective function which transforms the original optimization problem into an approximated convex problem. The procedure is as follows. First, we derive the lower bound on the instantaneous transmission rate of LPWA node $n$ during time slot $k$ at $p_n(k)=\hat{p}_n(k)$: 
	\begin{align}
	R_n^{NOMA}(k) &\geq R_{n,1}^{NOMA}(k) + \hat{R}_{n,2}^{NOMA}(k),
	\end{align} 
	where $R_{n,1}^{NOMA}(k)=\log_2\left(  \sum_{m=n}^U \frac{\rho_m(k) \mu_{n,m}(k)}{\sigma^2} p_m(k )  g_m(k)  + 1  \right) - \frac{1+ \sum_{m=n+1}^U  \frac{\rho_m(k) \mu_{n,m}(k)}{\sigma^2} p_m(k )  g_m(k ) }{1 + \sum_{m=n+1}^U \frac{\rho_m(k) \mu_{n,m}(k)}{\sigma^2} \hat{p}_m(k)  g_m(k) }$ is concave, and $\hat{R}_{n,2}^{NOMA}(k)=1-\log_2\left(\sum_{m=n+1}^U  \frac{\rho_m(k) \mu_{n,m}(k)}{\sigma^2} \hat{p}_m(k )  g_m(k)  + 1  \right)$ is independent of $p_n(k)$. 
	Subsequently, for a given $\hat{p}_m(k)$, problem (\ref{mainProbGenT}) simplifies to the equivalent convex optimization problem:
	\begin{subequations}
		\begin{align}
		\underset{ p_n(k)}{\max }\hspace{1mm}& \sum_{n=1}^U R_{n,1}^{NOMA}(k) + \hat{R}_{n,2}^{NOMA}(k),\\
		\text{s.t. }
		& \text{ C$_{1}$-C$_{2}$}. 
		\end{align}\label{mainProbGenTeq}
	\end{subequations}
	The value of $\hat{p}_m(k)$ used to solve (\ref{mainProbGenTeq}) is the value of $p_n(k)$ from the previous iteration and is initialized at iteration=0 by a feasible solution to (\ref{mainProbGenT}).
\remark{ If $\frac{T_{a,n}\rho_n(k)}{T}$ is constant, (\ref{objfun}) simplifies to 
       \small{ \begin{align}
        	\sum_{n=1}^U R_n^{NOMA} &=  \sum_{k =1}^K \frac{T_{a,n}\rho_n(k)}{T} 
        	\log_2\left( \sum_{m = 1 }^U \frac{\rho_m(k) \mu_{n,m}}{\sigma^2} (k) p_m(k) g_m(k) + 1 \right) . 
    	\end{align}}
    whose optimal solution is exactly given by the maximum power $P_{n,max}$ satisfying the constraints C$_1$ and C$_2$. 
    The condition $\frac{T_{a,n}\rho_n(k)}{T}$ being the same for all $n$'s is not always verified by our LPWA network, but it happens for example in perfect orthogonality case where only LPWA nodes having the same ToA interfere with each other. 
}\par}

\section{Numerical Results}
{\ In this section, we present some simulation results to validate our proposed solution. We choose LoRa network as an example of LPWA network. For our purposes, LoRa operates in the unlicensed $868$ MHz frequency and uses the chirp spread spectrum (CSS) modulation for which there are $6$ spreading factors (SFs) corresponding to $6$ ToAs ($M=6$) and the Duty cycle imposed is either $1\%$ or $10\%$. In the following simulations, we consider $d=1\%$, $BW=125$ kHz, $K=4$, and SFs having integer values from $7$ to $12$. The noise variance is defined as $\sigma^2= -174+\text{NF} + 10\log_{10}(\text{BW})$ in dBm, where $\text{NF}$ is the noise figure equal to $6$ dB. The path-loss exponent for both the power transfer and the information transfer links is $3$. The maximum transmit power for all LoRa nodes is chosen equal to $P_t=14$ dBm. In all figures, we are plotting time-averaged sum of uplink transmission throughput for two types of EH sources: RF source and solar source. For the RF EH source, we consider the nonlinear EH model with parameters $N_b=3$, $P_b=0,1$W, $a=1500$, $b=0.0022$, and $M=24$ mW which were shown in \cite{elena} to fit the experimental data in \cite{datael}. For the solar EH source, we consider $\zeta_S=15\%$, $A_{Area}=58mm\times 58mm$, $G_{bn}=1000$W/m$^2$, $\theta\in[0,\frac{\pi}{2}]$ \cite{greg}. In all figures, three different interference scenarios are plotted: "no interference", "co-SF interference", and "co-SF $\&$ inter-SF interference".
\par}
\begin{figure*}[t]
    \centering
    \begin{subfigure}[b]{0.32\textwidth}
        \includegraphics[scale=0.17]{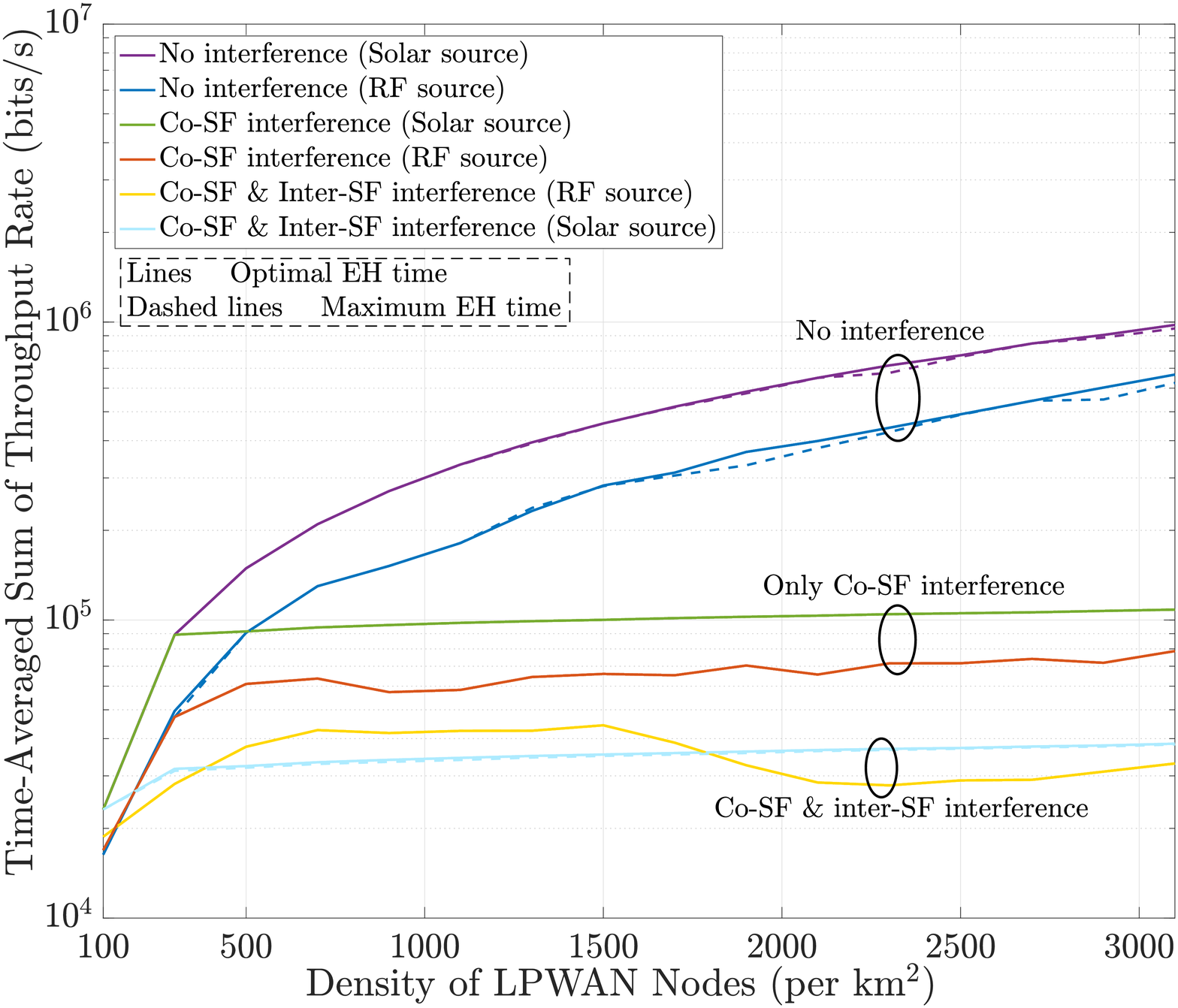}
        \caption{Optimal and maximum EH time}
        \label{subFigEHopt}
    \end{subfigure} \hspace{1mm}
    \begin{subfigure}[b]{0.32\textwidth}
        \includegraphics[scale=0.17]{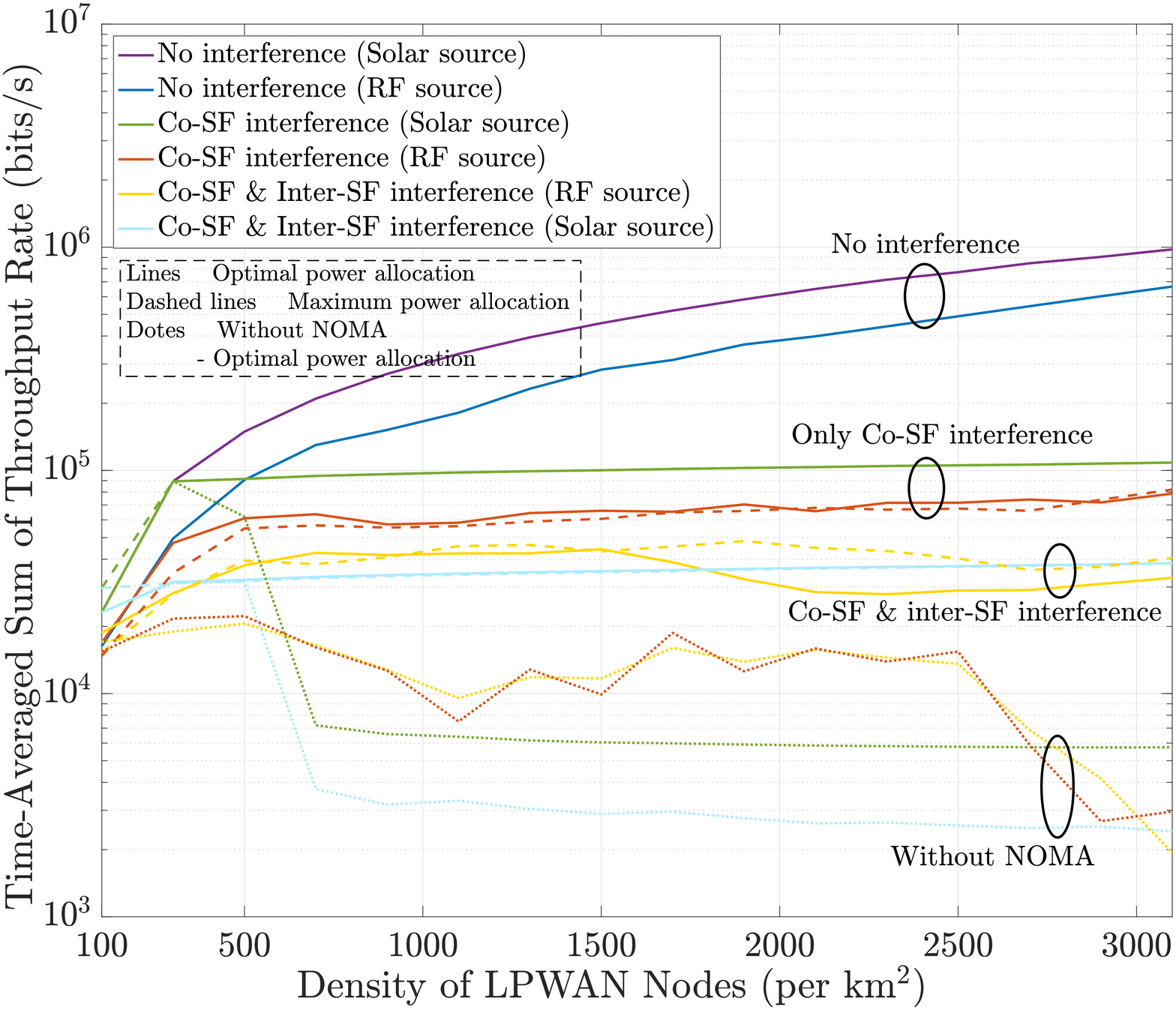}
        \caption{Optimal and maximum power}
        \label{subFigPopt}
    \end{subfigure} \hspace{1mm}
    \begin{subfigure}[b]{0.32\textwidth}
        \includegraphics[scale=0.17]{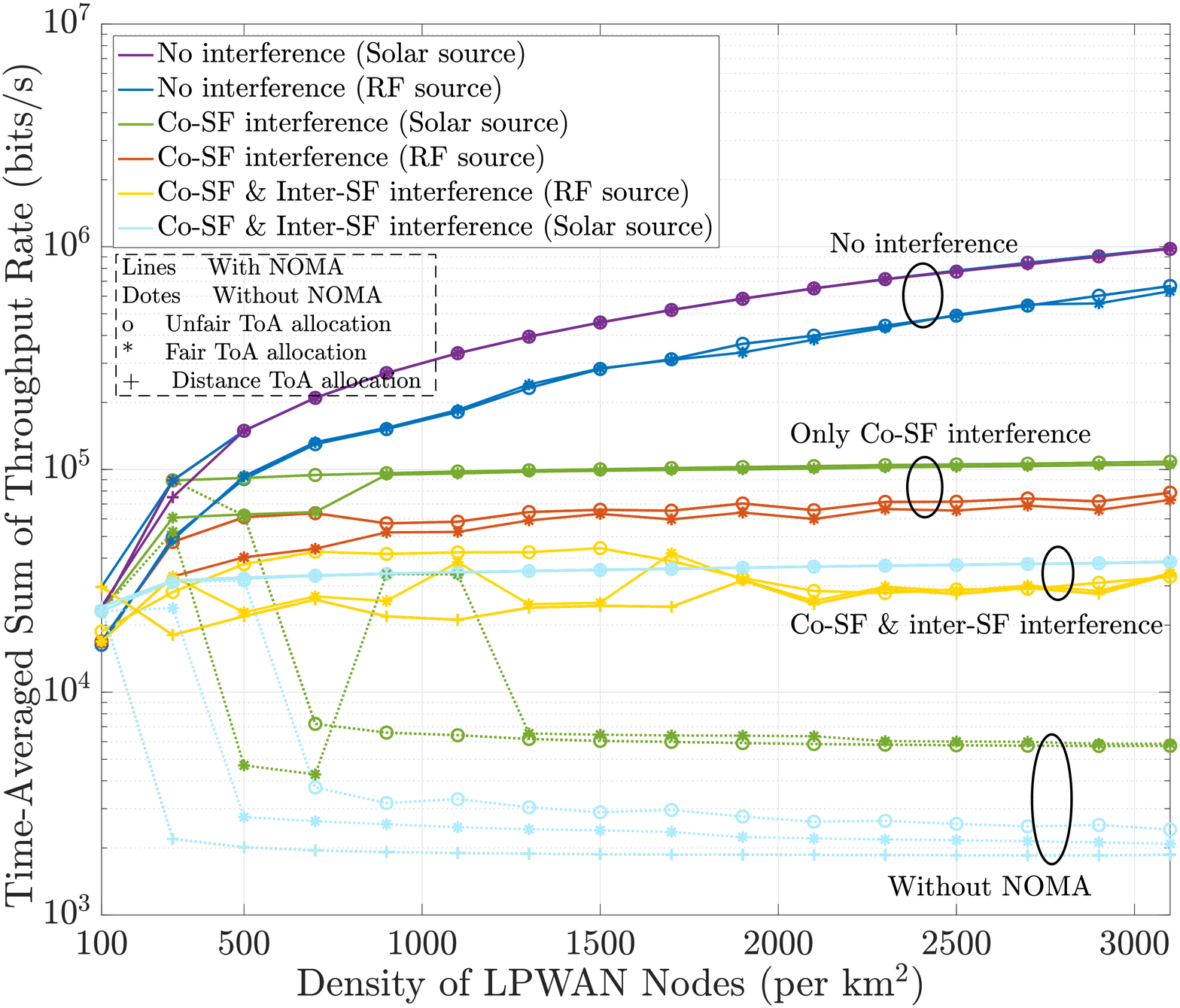}
        \caption{Different ToA allocations}
        \label{subFigSFalloc}
    \end{subfigure} 
    \caption{Time-averaged sum of uplink transmission throughput versus the density of LPWA nodes per km$^2$}
    \label{Figtot}
\end{figure*} 
{\ In Fig. \ref{subFigEHopt}, we have plotted the time-average sum of uplink transmission throughput rate versus the density of LPWA nodes per km$^2$ with unfair ToA allocation and optimal power allocation. We have compared the optimal EH time (lines) versus the maximum EH time $\tau_{e,n,max}(k)$ (dashed lines). For the "no interference" case, the time-average sum of throughput rate with optimal EH time is higher than the one with the maximum EH time for density greater than $1500$ nodes/km$^2$. Moreover, we have a closer match for higher densities for both the "only co-SF interference" and the "co-SF $\&$ inter-SF interference" cases. Another observation is that the time-average sum of throughput rate with solar source is always outperforming the one with RF source. 
\par} 
{\ In Fig. \ref{subFigPopt}, we have plotted the time-average sum of uplink transmission throughput rate versus the density of LPWA nodes per km$^2$ with unfair ToA allocation and optimal EH time. We compare the optimal power allocation versus the maximum power allocation. We also compare this to the no NOMA case representing the commonly evaluated LPWAN performance. First, the use of NOMA in LPWANs demonstrates a clear advantage. For the solar source, we can see an improvement of more than $15$ times in the time-average sum of uplink transmission throughput rates when using NOMA for both "only co-SF interference" and "co-SF $\&$ inter-SF interference" cases. In addition, we can see how much the gap between the "no interference" case and the "only co-SF interference" case, reduces compared to the gap without NOMA. Hence, incorporating NOMA at the gateway helps reduce the impact of the co-SF interference on the time-averaged sum rate. Moreover, both the maximum power and optimal power using CCCP give good performance.
\par} 
 
{\ In Fig. \ref{subFigSFalloc}, we have plotted the time-average sum of uplink transmission throughput rate with optimal EH time and optimal power allocation. We have compared the different ToA allocation schemes: unfair, fair, and distance-based. The distance-based ToA allocation assigns the ToAs depending on the distance between the LPWA nodes and the gateway~\cite{canlorascale}. First, the distance-based ToA allocation has the poorest performance. Moreover, fairly or unfairly allocating LPWA nodes end up to have close time-average sum of the throughput rate.
\par} 

\section{Conclusion}
{\ In this paper, we optimized the time-average sum of uplink transmission throughput rates for NOMA-based LPWA networks powered by ambient energy for a given time window $T$. We optimized the ToA allocation for both fair and unfair scenarios, we optimized the EH allocation using a one-dimensional search method, and then we optimized the power allocation using an approximated convex problem using CCCP procedure. We have seen that the maximum transmit power allocation is the optimal solution where perfect orthogonality applies.
Through simulation results, we have observed that the time-average sum of uplink rate improves significantly using NOMA and either of fair or unfair allocation exhibits similar performance in terms of the time-average sum of the throughput rate.
\par}

\bibliographystyle{IEEEtran}
\bibliography{clean_references}

\end{document}